\newcommand{\ignore}[1]{}
\newcommand{\symb}[1]{$\mathsf{#1}$}
\newcommand{\sym}[1]{\textsf{#1}}
\newcommand{\Puzzle}{\mathsf{puzzle}}
\newcommand{\PuzzleSol}{\mathsf{pk, nonce}}
\newcommand{\sig}[1]{\braket{#1}}
\newcommand{\puzzleTime}{D}
\newcommand{\epoch}{lifespan}
\newcommand{\epochs}{lifespans}
\newcommand{\CEV}{c,e,v}
\newcommand{\CEZ}{c,e,0}
\newcommand{\CEVS}{c,e,v,s}
\newcommand{\Members}{\mathcal{M}}
\newcommand{\TXs}{\{\mathsf{tx}\}}
\newcommand{\OracleH}{\mathcal{H}}
\newcommand{\Status}{\mathsf{status}}
\newcommand{\Propose}{\mathsf{propose}}
\newcommand{\Prepare}{\mathsf{prepare}}
\newcommand{\Commit}{\mathsf{commit}}
\newcommand{\Notify}{\mathsf{notify}}
\newcommand{\Blame}{\mathsf{view\text{-}change}}
\newcommand{\NewView}{\mathsf{new\text{-}view}}
\newcommand{\Repropose}{\mathsf{repropose}}
\newcommand{\AcceptCert}{\mathcal{A}}
\newcommand{\CommitCert}{\mathcal{C}}
\newcommand{\BlameSum}{\mathcal{V}}
\newcommand{\StatusSum}{\mathcal{S}}
\newtheorem{theorem}{Theorem}
\newtheorem{lemma}{Lemma}
\begin{document}

\newcommand{\name}{Solida}

\title{\name: A Blockchain Protocol Based on \\ Reconfigurable Byzantine Consensus} 

\let\svthefootnote\thefootnote
\let\thefootnote\relax
\footnotetext{$^*$
Solidus was a gold coin used in the Byzantine Empire. \name{} is our way of (mis-)spelling Solidus.}
\footnotetext{$^{**}$
An earlier version of this paper contains incentive designs (https://arxiv.org/abs/1612.02916v1).
This version adds rigorous analysis of safety and liveness assuming an honest supermajority.
We leave rigorous analysis of incentives to future work.}
\let\thefootnote\svthefootnote

\author[1]{Ittai Abraham}
\author[2]{Dahlia Malkhi}
\author[3]{Kartik Nayak}
\author[4]{Ling Ren}
\author[5]{Alexander Spiegelman}

\affil[1]{VMware Research, Palo Alto, USA --
  \texttt{iabraham@vmware.com}}
\affil[2]{VMware Research, Palo Alto, USA -- \texttt{dmalkhi@vmware.com}}
\affil[3]{University of Maryland, College Park, USA -- \texttt{kartik@cs.umd.edu}}
\affil[4]{Massachusetts Institute of Technology, Cambridge, USA -- \texttt{renling@mit.edu}}
\affil[5]{Technion, Haifa, Israel -- \texttt{sasha.speigelman@gmail.edu}}

\date{}

\maketitle

\begin{abstract}
The decentralized cryptocurrency Bitcoin has experienced great success but also encountered many challenges. 
One of the challenges has been the long confirmation time. 
Another challenge is the lack of incentives at certain steps of the protocol, 
raising concerns for transaction withholding, selfish mining, etc. 
To address these challenges, we propose Solida, a decentralized blockchain protocol based on reconfigurable Byzantine consensus augmented by proof-of-work.
Solida improves on Bitcoin in confirmation time, and provides safety and liveness assuming the adversary control less than (roughly) one-third of the total mining power. 
\end{abstract}

\section{Introduction}
\label{sec:intro}

Bitcoin is the most successful decentralized cryptocurrency to date.
Conceptually, what a decentralized cryptocurrency needs is consensus in a \emph{permissionless} setting:
Participants should agree on the history of transactions, and anyone on the network can join or leave at any time. 
Bitcoin achieves permissionless consensus using what's now known as Nakamoto consensus. 
In Nakamoto consensus, participants accept the longest proof-of-work (PoW) chain as the history of transactions, and also contribute to the longest chain by trying to extend it.
Thus, cryptocurrencies are also known as public ledgers or blockchains in the literature. 
While enjoying great success, Bitcoin does have several drawbacks.
The most severe one is perhaps its limited throughput and long confirmation time of transactions.
For instance, presently a block can be added every ten minutes on average and it is suggested that one waits for the transaction to be six blocks deep.
This implies a confirmation time of about an hour.
Furthermore, each block can contain about 1500 transactions~\cite{Bitcoin_tx_cnt} with the current 1MB block size limit. 
This yields a throughput of  $\sim$2.5 transactions per second. 

A number of attempts were made to improve the throughput and confirmation time of Bitcoin. 
These fall into two broad categories. 
One category~\cite{GHOST,LSZ15,Bitcoin-NG} tries to improve Nakamoto consensus.
The other category~\cite{PeerCensus,ByzCoin,Elastico16,HybridConsensus} hopes to replace Nakamoto consensus with classical Byzantine fault tolerant (BFT) consensus protocols.
These proposals envision a rolling committee that approves transactions efficiently using a Byzantine Fault Tolerant (BFT) protocol such as PBFT~\cite{PBFT}.

This second category presents a new approach to blockchain designs and has potential for significant improvements in performance and scalability over Nakamoto consensus.
This is especially true for transaction confirmation time, since decisions in Byzantine consensus are final once committed. 
Pass and Shi~\cite{HybridConsensus} formalized the above intuition with \emph{responsiveness} notion for permissionless consensus protocols.
A protocol is \emph{responsive} if it commits transactions (possibly probabilistically) at the speed of the \emph{actual} network delay, 
without being bottlenecked by hard-coded system parameters (e.g., 10 minutes for Bitcoin). 
In the same paper, Pass and Shi proposed the hybrid consensus protocol, which uses a (slow) Nakamoto PoW chain to determine the identities of committee members, and let the committee to approve transactions responsively.

In this work, we present \name{}, 
a decentralized blockchain protocol based on reconfigurable Byzantine consensus.
We were indeed inspired by the work of Pass and Shi~\cite{HybridConsensus}. 
Yet, \name{} is conceptually very different from Bitcoin or hybrid consensus as we do not rely on Nakamoto consensus for any part of our protocol. 
Committee election and transaction processing are both done by a Byzantine consensus protocol in \name{}.    
PoW still plays a central role in \name{}, 
but we remark that use of PoW does not equate Nakamoto consensus. 
The heart of Nakamoto consensus is the idea that ``the longest PoW chain wins''.
This creates possibility of temporary ``chain forks''.
A decision in Nakamoto consensus becomes committed only if it is ``buried'' sufficiently deep in the PoW chain, 
which leads to Bitcoin's long confirmation time.

Looking from a different angle, we can think of PoW as a leader election oracle that is Sybil-proof in the permissionless setting.
But this oracle is imperfect as leader contention can still occur when multiple miners find PoWs around the same time.
Nakamoto consensus can be thought of as a probabilistic method to resolve leader contention in which miners ``vote on'' contending leaders using their mining power.
A leader (or its block) ``wins'' the contention gradually and probabilistically, until an overwhelming majority of miners ``adopt'' it by mining on top of it.
In \name{}, we also use PoW as an imperfect Sybil-proof leader election oracle. 
But instead of using Nakamoto consensus, we use a traditional Byzantine consensus protocol to resolve leader contention with certainty, rather than probabilistically.  
 

Once we have a committee, electing new members using Byzantine consensus (as opposed to Nakamoto) just seems to be the natural design once we have a closer look at the details of the protocol.
A central challenge of this framework is to reconcile the \emph{permissioned} nature of Byzantine consensus protocols and the \emph{permissionless} requirement of decentralized blockchains. 
Byzantine consensus protocols like PBFT assume a static group of committee members,
but to be permissionless and decentralized, 
committee members must change over time 
--- a step commonly referred to as \emph{reconfiguration}.
Note that the Nakamoto chain in hybrid consensus only provides agreement on the identities of the new members and when reconfiguration \emph{should} happen. 
It does not dictate when and how reconfiguration \emph{actually} happens. 
The standard reconfiguration technique requires a consensus decision from the old committee on the \emph{closing state}~\cite{VP09,Zab11,Rodrigues12,BFTSmart,Raft,BVP}.
It is then just natural to include in that consensus decision the \emph{new configuration} (i.e., the identity of the new member), at no extra cost, thereby eliminating the need for Nakamoto chains altogether. 

Not relying on Nakamoto consensus gives \name{} a few advantages over hybrid consensus.
First, the identities of committee members are exposed for a shorter duration in \name{} compared to hybrid consensus, because their identities no longer need to be buried in a Nakamoto chain.
(Admittedly, members' identities are still exposed \emph{during} their service on the committee for both \name{} and hybrid consensus.) 
Second, defending selfish mining~\cite{SelfishMining,StubbornMining} is much easier with the help of a committee (cf. Section~\ref{sec:reconfig}). 
In comparison, hybrid consensus requires quite a complex variant of Nakamoto consensus in FruitChain~\cite{FruitChain} to defend against selfish mining.
Third, analysis of our protocol is much simpler. 
In contrast, although the Bitcoin protocol is simple and elegant, its formal modeling and analysis turn out to be highly complex~\cite{GKL15,PSS16}.
FruitChain~\cite{FruitChain} and the interaction between Nakamoto and Byzantine consensus~\cite{HybridConsensus} further complicate the analysis.

\par\medskip\noindent\textbf{Contribution.}
The high-level idea of designing blockchains protocols using reconfigurable Byzantine consensus is by no means new.
PeerCensus~\cite{PeerCensus} and ByzCoin~\cite{ByzCoin} are two other works in this framework and they predate hybrid consensus and \name{}.
Comparing to those two works is tricky since their protocols were described only at a high level.
But no matter how one interprets their protocols, our paper makes the following new contributions:
\begin{itemize}
\item[--] \textbf{Detailed protocol.} 
We present full details of our protocol, and in particular, the reconfiguration step.
Reconfiguration is perhaps the step that deserves the most detailed treatment, since the protocol between two reconfiguration events is just conventional Byzantine consensus. 

\item[--] \textbf{Rigorous proof.}
We rigorously prove that \name{} achieves safety and liveness if the adversary's ratio of mining power does not exceed (roughly) 1/3. 

\item[--] \textbf{Implementation and evaluation.}
We implement \name{} and measure its performance.
Our implementation is fully Byzantine fault tolerant.
With a 0.1s network latency and 75 Mbps network bandwidth,
\name{} with committees of 1000 members can commit a consensus decision in 23.6 seconds.
\end{itemize}


\subsection{Overview of the \name{} Protocol}
\label{sec:overview}
At a high level, \name{} runs a Byzantine consensus protocol among a
set of participants, called a committee, that dynamically change over time.
The acting committee commits transactions into a linearizable log using a modified version of PBFT~\cite{PBFT}.
The log is comparable to the Bitcoin blockchain, also called a public ledger.
We say each consensus decision fills a \emph{slot} in the ledger.
A consensus decision can be either (1) a batch of transactions, (2) a reconfiguration event.
The first type of decision is analogous to a block in Bitcoin.
The second type records the membership change in the committee.

We denote the $i$-th member in chronological order as $M_i$, and the committee size is $n=3f+1$.
Then, the $i$-th committee $C_i = (M_i, M_{i+1}, M_{i+2}, \cdots, M_{i+n-1})$.
The first committee $C_1$ is known as the Genesis committee, and its $n$ members are hard coded.
After that, each new member is elected onto the committee one at a time, by the acting committee using \emph{reconfiguration consensus decisions}. 

At any time, one committee member serves as the leader.
It proposes a batch of transactions into the next empty slot $s$ in the ledger.
Other members validate proposed transactions (no double spend, etc.) and commit them in two phases. 
A Byzantine leader cannot violate safety but may prevent progress.
If members detect lack of progress, they elect the next leader in the round robin order using standard techniques from PBFT view change.
The next leader needs to stop previous leaders, learn the status from $2f+1$ members and possibly redo consensus for previous slots, before continuing on to future empty slots.

The more interesting part of protocol is reconfiguration.
To be elected onto the committee, a miner needs to present a PoW, i.e., a solution to a moderately hard computational puzzle.
Upon seeing this PoW, the current committee tries to reach a special \emph{reconfiguration consensus decision} that commits 
(1) the identity of the new member, i.e., a public key associated with the PoW, and 
(2) the closing state before the reconfiguration.
Once this reconfiguration consensus decision is committed, the system transitions into the new configuration $C_{i+1}$.
Starting from the next slot, PoW finder becomes the newest committee member $M_{i+n}$, and the oldest committee member $M_i$ loses its committee membership.

Here, a crucial design choice is: \emph{who should be the leader that drives reconfiguration?}
The first idea that comes to mind is to keep relying on the round robin leaders. 
But this approach presents a challenge on the analysis.
The current leader may be Byzantine and refuse to reconfigure when it should. 
By doing so, it tries to buy time for other adversarial miners to find competing PoWs, so that it can nominate a Byzantine new member instead.
Although we can conceive mechanisms to replace this leader, subsequent leaders may also be Byzantine.
The probability of taking in an honest new member now depends on the pattern of consecutive Byzantine leaders on the current committee.
This means we will not be able to apply the Chernoff inequality to bound the number of Byzantine members on a committee.
It is unclear whether this is merely a mathematical hurdle or the adversary really has some way to increase its representation on the committee gradually and to go above $n/3$ eventually.

Therefore, we will switch to \emph{external leaders} for reconfiguration.
In particular, the successful miner would act as the leader $L$ for reconfiguration and try to elect itself onto the committee.
When $L$'s reconfiguration proposal is committed (becomes a consensus decision), reconfiguration is finished, 
and the system starts processing transactions under the new committee with $L$ being the leader.
Note that at this point $L$ becomes a committee member, so the system has seamlessly transitioned back to internal leaders.  

By giving all external miners opportunities to become leaders, we introduce a new leader contention problem. 
It is possible that before $L$ can finish reconfiguration, another miner $L'$ also finds a PoW.
Moreover, there may be concurrent internal leaders who are still trying to propose transactions.
\name{} resolves this type of contention through a Paxos-style leader election~\cite{Paxos} with ranks. 
Only a higher ranked leader can interrupt lower ranked ones.
To ensure safety, the higher ranked leader may have to honor the proposals from lower ranked leaders by re-proposing them, if there exists one.

Now the key challenge is to figure out how leaders (internal or external) should be ranked in our protocol.
One idea is to rank external leaders by the output hash of their PoW, and stipulate that external leaders are higher ranked than all internal leaders in that configuration.
This approach, however, allows the adversary to prevent progress once in a while.
If a Byzantine miner submits a ``high ranked'' PoW but does not drive reconfiguration, the system temporarily stalls until some honest miner finds an even higher ranked PoW.
Note that no transactions can be approved in the meantime because internal leaders are all lower ranked than the Byzantine PoW finder. 
Our solution to this problem is to ``expire'' stalling external leaders, and give the leader role back to current committee members, so that the committee can resume processing transactions under internal leaders until the next external leader emerges. 
Crucially, during this process, we must enforce a total order among all leaders, internal or external, to ensure safety.
The details are presented in Section~\ref{sec:protocol_honest}.

\subsection{Overview of the Model and Proof}
We adopt the network model of Pass et al.~\cite{PSS16}: a bounded message delay of $\Delta$ that is known apriori to all participants.
Note that this is the standard synchronous network model in distributed computing, though Pass et al.~\cite{PSS16} call it ``asynchronous networks''.
What they really meant was that \emph{Bitcoin does not behave like a conventional synchronous protocol.}
In a conventional synchronous protocol, participants move forward in synchronized rounds of duration $\Delta$, 
but Bitcoin does nothing of this sort and has no clear notion of rounds.  
Our adoption of a synchronous network model may raise a few immediate questions, which we discuss below.

\par\medskip\noindent\textbf{Is the bounded message delay assumption realistic for the Internet?}
How realistic this assumption is depends a lot on the parameter $\Delta$.
With a conservative estimate, say $\Delta = $ 5 seconds, and Byzantine fault tolerance, the assumption may be believable.
Fault tolerance also helps here.
Participants experiencing slow networks and unable to deliver messages within $\Delta$ are considered Byzantine.
So the assumption we require in \name{} is that adversarial participants and ``slow'' participants collectively control no more than (roughly) 1/3 of the mining power.

More importantly, the bounded network delay assumption seems necessary for all PoW-based protocols. 
Otherwise, imagine that whenever an honest member finds a PoW, its messages are arbitrarily delayed due to asynchrony. 
Then, it is as if that the adversary controls 100\% of the mining power.
In this case, the adversary can create forks of arbitrary lengths in Bitcoin or completely take over the committee in BFT-based blockchains~\cite{PeerCensus,ByzCoin,HybridConsensus}.
Pass et al. formalized the above intuition and showed that Bitcoin's security fundamentally relies on the bounded message delay~\cite{PSS16}. 
The larger $\Delta$ is, the smaller percentage of Byzantine participants Bitcoin can tolerate; 
if $\Delta$ is unbounded (an asynchronous network), then even an adversary with minuscule mining power can break Bitcoin.

\par\medskip\noindent\textbf{Why do we use PBFT if the network is synchronous?}
It is well known that there exist protocols that can tolerate $f<n/2$ Byzantine faults in a synchronous network~\cite{KK06,XFT,SyncBC}.
By requiring $f<n/3$, PBFT preserves safety under asynchrony. 
But this seems unnecessary given that we assume synchrony.
We adopt PBFT (with modifications) because it is an established protocol that provides responsiveness. 
Most protocols that tolerate $f<n/2$ are synchronous and not responsive, because they advance in rounds of duration $\Delta$. 
If the actual latency of the network is better than $\Delta$ --- 
either because the $\Delta$ estimate is too pessimistic, the network is faster in the common case, or the network speed has improved as technology advances ---  
a responsive protocol would offer better performance than a synchronous one.
In other words, we opt to use an asynchronous protocol in a synchronous setting, essentially abandoning the strength of the bounded message delay guarantee, in order to run as fast as the network permits.
But like all PoW-based blockchains, we rely on synchrony for safety in the worse case. 
We remark that a recent protocol called XFT~\cite{XFT} seems to be responsive in its steady state while tolerating $f<n/2$ Byzantine faults. 
An interesting future direction is to explore whether we can combine our reconfiguration techniques and XFT to get the best of both worlds.

\par\medskip\noindent\textbf{Proof outline.}
If each committee in \name{} has no more than $f<n/3$ Byzantine members, then safety mostly follows from traditional Byzantine consensus. 
To guarantee $f<n/3$, we require that reconfiguration is mostly ``fair'', i.e., the probability that a newly elected member is honest (Byzantine) is roughly proportional to the mining power of honest (Byzantine) miners.
Additionally, with external leaders driving reconfiguration, we can show independence between reconfiguration events, and can then apply the Chernoff inequality to bound the probability of having $n/3$ Byzantine members on a committee.
Thus, the key step of the proof is to show that the Byzantine participants cannot distort their chance of joining the committee by too much.
Here, the adversary's main advantage is the network delay for honest participants, which essentially buys extra time for Byzantine miners to find PoWs.
But since the network delay is bounded, the Byzantine miners' advantage in finding PoWs can be also bounded.

\section{Related Work}
\label{sec:related-work}

\begin{table*}[tb]
\centering
\caption{
	\textbf{Comparison of consensus protocols in \name{} and related designs.} 
	Bitcoin and \name{} combine committee election and transactions into a single step.
	Other protocols decouple the two, which we reflect with two separate columns in the table. 
}
\label{tab:consensus_compare}
\begin{tabular}{c | c c c c}
  	\toprule
	\multirow{2}{*}{Design}		& 	\multicolumn{2}{c}{Consensus for}	& \multirow{2}{*}{Responsive} & Defend\\
						& committee election 		& transactions	& & selfish mining	\\
  	\midrule
	Bitcoin~\cite{Bitcoin} 	&\multicolumn{2}{c}{Nakamoto} 			& No & No\\
	Bitcoin-NG~\cite{Bitcoin-NG} 	& Nakamoto 		& Nakamoto		& No & No	\\
	PeerCensus~\cite{PeerCensus} 	& Byzantine$^*$ 		& Byzantine		& Yes & No	\\
	ByzCoin~\cite{ByzCoin} 			& Nakamoto/Byzantine		& Byzantine		& Yes & Yes	\\
	Hybrid consensus~\cite{HybridConsensus} 	& Nakamoto 	& Byzantine		& Yes & Yes	\\
	\hline
	\textbf{\name{}} 		& \textbf{Byzantine} 	& \textbf{Byzantine}	& \textbf{Yes} & \textbf{Yes} 	\\
  	\bottomrule
\end{tabular}
\end{table*}

\par\medskip\noindent\textbf{A comparison between \name{} and related designs.}
Table~\ref{tab:consensus_compare} compares \name{} and related designs.
To start, we have Bitcoin that pioneered the direction of permissionless consensus (though not using the term).
Bitcoin uses Nakamoto consensus, i.e., PoW and the longest-chain-win rule, and does not separate leader election and transaction processing.
There have been numerous follow-up works in the Nakamoto framework.
Some ``altcoins'' simply increases the block creation rate without other major changes to the Bitcoin protocol.
A few proposals relax the ``chain'' design and instead utilize other graph structures to allow for faster block creation rate~\cite{GHOST,LSZ15,SPECTRE}.
Some proposals~\cite{Bitcoin_lightning} aim to improve throughput with off-chain transactions.

The key idea of Bitcoin-NG is to decouple transactions and leader election~\cite{Bitcoin-NG}.
A miner is elected as a leader if it mines a (key) block in a PoW chain.
This miner/leader is then responsible for signing transactions in small batches (microblocks) until a new leader emerges.
Since (key) blocks in the PoW chain are small, Bitcoin-NG reduces the likelihood of forks.
One can think of each leader in Bitcoin-NG to be a single-member committee.
Since a single leader cannot be trusted, any transactions it approves still have to be buried in a Nakamoto PoW chain.
Thus, Bitcoin-NG does not improve transaction confirmation time.

To our knowledge, PeerCensus~\cite{PeerCensus} is the first work to envision a committee that approves transactions using PBFT.
It makes the observation that Byzantine consensus enables fast transaction confirmation.
The description and pseudocode are very high level.
Under our best interpretation, it seems that the committee is responsible for reconfiguring itself using Byzantine consensus, but no detail is given on reconfiguration.
ByzCoin~\cite{ByzCoin} employs multi-signatures to improve the scalability of PBFT. 
The reconfiguration algorithm in ByzCoin seems to be left open with multiple options.
The description in the paper~\cite{ByzCoin} is quite terse, and can be interpreted in multiple ways.
A later blog post~\cite{ByzCoinPost2} suggests adopting hybrid consensus~\cite{HybridConsensus}.
If ByzCoin goes this path, then our comparison to hybrid consensus applies to ByzCoin as well.
Through private communication, we learned that ByzCoin designers also had in mind a PBFT-style reconfiguration protocol, which seems to involve extra steps not described in the paper.
We will have to wait to see a detailed published reconfiguration protocol to compare with ByzCoin properly.
Pass and Shi proposed hybrid consensus~\cite{HybridConsensus}, which we have compared to in detail in Section~\ref{sec:intro}.

\par\medskip\noindent\textbf{Blockchain analysis.}
While the Bitcoin protocol is quite simple and elegant, proving its security rigorously turns out to be highly nontrivial. 
The original Bitcoin paper~\cite{Bitcoin} and a few early works~\cite{SelfishMining,GHOST} considered specific attacks.
Garay et al.~\cite{GKL15} provided the first comprehensive security analysis for Bitcoin, by modeling Bitcoin as a synchronous protocol running in a synchronous network.  
Pass et al.~\cite{PSS16} refined the analysis after observing that Bitcoin, while relying on a synchronous assumption, does not behave like a conventional synchronous protocol (cf. Section~\ref{sec:overview}).

\par\medskip\noindent\textbf{Byzantine Consensus.}
Consensus is a classic problem in distributed computing.
There are a few variants of the consensus problem under Byzantine faults. 
A theoretical variant is Byzantine agreement~\cite{LSP82} in which a fixed set of participants, each with an input value, try to agree on the same value. 
A more practical variant, which is also more suitable for blockchains, is BFT state machine replication (SMR).
In this variant, a fixed set of participants try to agree on a sequence of values that may come from any participant or even external sources.
Most BFT SMR protocols follow the PBFT~\cite{PBFT} framework. 


\section{Model}
\label{sec:model}

\par\medskip\noindent\textbf{Network.}
We consider a \emph{permissionless} setting in which participants use their public keys as pseudonyms and can join or leave the system at any time.
Participants are connected to each other in a peer-to-peer network with a small diameter.
As mentioned, we adopt a synchronous network model: whenever a participant sends a message to another participant, the message is guaranteed to reach the recipient within $\Delta$ time.
For convenience, we define $\Delta$ to be the end-to-end message delay bound.
If there are multiple hops from the sender to the recipient, $\Delta$ is the time upper bound for traveling all those hops.
Our protocol uses computational puzzles, i.e., PoW.
Similar to Bitcoin, the difficulty of the puzzle is periodically adjusted so that the expected time to find a PoW stays at $\puzzleTime$. 
$\puzzleTime$ should be set significantly larger than $\Delta$ according to our analysis in Section~\ref{sec:safety}.

\par\medskip\noindent\textbf{Types of participants.}
Participants in the system are either \emph{honest} or \emph{Byzantine}.
Honest participants always follow the prescribed protocol, and are capable of delivering a message to other honest participants within $\Delta$ time. 
Byzantine participants are controlled by a single adversary and can thus coordinate their actions.
At any time, Byzantine participants collectively control no more than $\rho$ fraction of the total computation power.
We assume the $n$ members in the Genesis committee $C_1$ are known to all participants, 
and that the number of Byzantine participants on $C_1$ is less than $n/3$. 

\par\medskip\noindent\textbf{Delayed adaptive adversary.}
We assume it takes time for the adversary to corrupt a honest participant.
It captures the idea that it takes time to bribe a miner or infect a clean host.
Most committee-based designs~\cite{PeerCensus,ByzCoin,HybridConsensus} require this assumption.
Otherwise, an adversary can easily violate safety by corrupting the entire committee, which may be small compared to the whole miner population.
This is formalized as a delayed adaptive adversary by Pass and Shi~\cite{HybridConsensus}. 
Specifically, we assume that even if the adversary starts corrupting a member as soon as it emerges with a PoW, by the time of corruption, the member would have already left the committee.
Whether this assumption holds in practice remains to be examined.
Algorand~\cite{Algorand} introduced techniques to hide the identities of the committee members, and can thus tolerate instant corruption.
Its core technique, however, seems to be tied to proof of stake, which has its own challenges such as the nothing-at-stake problem.

\section{The Protocol}
\label{sec:protocol_honest}

\subsection{Overview}

\name{} has a rolling committee running a Byzantine consensus protocol to commit (batches of) transactions and reconfiguration events into a ledger.
We say each committed decision fills a \emph{slot} in a linearizable ledger.
The protocol should guarantee all honest members commit the same value for each slot (safety) and keeps moving to future slots (liveness).   
To provide safety and liveness, the number of Byzantine members on each committee must be less than $f$ ($n=3f+1$), which will be proved in Section~\ref{sec:safety}. 

We first describe the protocol outside reconfiguration, i.e., when no miner has found a PoW for the current puzzle.
This part further consists of two sub-protocols: 
\emph{steady state} under a stable leader (Section~\ref{sec:steady}), and \emph{view change} to replace the leader (Section~\ref{sec:vc}).
Members monitor the current leader,
and if the leader does not make progress after some time, members support the next leader.
The new leader has to learn what values have been proposed, and possibly re-propose those values.
This part of the protocol is very similar to PBFT~\cite{PBFT} except that the views in our protocol have more structures to support external leaders for \emph{reconfiguration} (Section~\ref{sec:reconfig}).

Each configuration can have multiple \epochs{}, and each \epoch{} can have many views.
We use consecutive integers to number configurations, \epochs{} within a configuration, and views within a \epoch{}.
The \epoch{} number is the number of PoWs (honest) committee members have seen so far in the current configuration.
Intuitively, when a new PoW is found, the lifespan of the previous PoW ends.
As explained in Section~\ref{sec:overview}, our protocol can switch back and forth between internal and external leaders and we must enforce a total order among all leaders.
To this end, we will associate each leader with a configuration-\epoch-view tuple $(\CEV)$, and rank leaders by $c$ first, and then $e$, and then $v$.
Each $(c,e,v)$ tuple also defines a steady state, in which the leader of that view, denoted as $L(\CEV)$, proposes values for members to commit.
The view tuple and the leader role are managed as follows.
\begin{itemize}
\item Upon learning that a reconfiguration event is committed into the ledger, a member $M$ increases its configuration number $c$ by 1, and resets $e$ and $v$ to 0. 
$L(c, 0, 0)$ is the newly added member by the previous reconfiguration consensus decision. 
\item Upon receiving a new PoW for the current configuration, a member $M$ increases its \epoch{} number $e$ by 1, and resets $v$ to 0.
$M$ now supports the finder of the new PoW as $L(c, e, 0)$ ($e\geq1$).
\item Upon detecting that the current leader is not making progress (timed out), a member $M$ increases its view number $v$ by 1.
$L(\CEV)$ ($v\geq1$) is the $l$-th member on the current committee, by the order they join the committee, where $l = \OracleH(c, e) + v \mod n$, and $\OracleH$ is a random oracle (hash function).
\end{itemize}
To summarize, the $0$-th leader of a \epoch{} is the newly added member (for $e=0$) or the new PoW finder ($e\geq1$).
After that, leaders in that \epoch{} are selected in a round robin manner, with a pre-defined starting point.
We chose a pseudorandom starting point using $\OracleH$ but other choices also work.

Now let us consider whether honest members agree on the identities of the leaders.
For a view $(\CEV)$ with $e=0$ or $v\geq1$, the leader identity is derived from a deterministic function on the tuple and the current committee member identities (i.e., previous reconfiguration decisions).
As long as the protocol ensures safety so far, members agree on the identities of these leaders.
For views of the form $(c,e,0)$ with $e\geq1$, members may not agree on the leader's identity if they receive multiple PoWs out of order.
This is not a problem for safety because it is similar to having an equivocating leader and a BFT protocol can tolerate leader equivocation.
However, contending leaders in a view may prevent progress.
Luckily, members will time out and move on to views $(\CEV)$ with $v\geq1$.
Unique leaders in those views will ensure liveness.

Crucially, our protocol forbids pipelining to simplify reconfiguration: 
a member $M$ sends messages for slot $s$ only if it has committed all slots up to $s-1$.
Let $\Members(\CEVS)$ be the set of members currently in view $(\CEV)$ working on slot $s$.
Each honest member $M$ can only belong to one such set at any point in time.

We use digital signatures even under a stable leader, i.e., we do not use PBFT's MAC optimization~\cite{PBFT}.
The MAC optimization is not effective for a cryptocurrency since there are already hundreds of digital signatures to verify in each block/slot.
We use $\sig{x}_M$ to denote a message $x$ that carries a signature from member $M$.
A message can be signed in layers, e.g., $\sig{\sig{x}_M}_{M'}$.
When the context is clear, we omit the signer and simply write $\sig{x}$ or $\sig{\sig{x}}$.

When we say a member ``broadcasts'' a message, we mean the member sends the message to all current committee members including itself.

\subsection{Steady State}
\label{sec:steady}

\begin{itemize}
\item (\textbf{Propose})
The leader $L$ picks a batch of valid transactions $\TXs$ and then broadcasts $\TXs$ and $\sig{\Propose, \CEVS, h}_L$ where $h$ is the hash digest of $\TXs$. 

After receiving $\TXs$ and $\sig{\Propose, \CEVS, h}_L$, a member $M \in \Members(\CEVS)$ checks
\begin{itemize}
\item $L=L(\CEV)$ and $L$ has not sent a different proposal,
\item $s$ is a \emph{fresh} slot, i.e., no value could have been committed in slot $s$ (details in Section~\ref{sec:reconfig}),
\item $\TXs$ is a set of valid transactions whose digest is $h$.
\end{itemize}

\item (\textbf{Prepare})
If all the above checks pass, $M$ broadcasts $\sig{\Prepare, \CEVS, h}$.

After receiving $2f+1$ matching $\Prepare$ messages, a member $M \in \Members(\CEVS)$ \emph{accepts} the proposal (represented by its digest $h$), and concatenates the $2f+1$ matching $\Prepare$ messages into an \emph{accept certificate} $\AcceptCert$. 
 
\item (\textbf{Commit})
Upon accepting $h$, $M$ broadcasts $\sig{\Commit, \CEVS, h}$.

After receiving $2f+1$ matching $\Commit$ messages, a member $M \in \Members(\CEVS)$ commits $\TXs$ into slot $s$, and concatenates the $2f+1$ matching $\Commit$ messages into a \emph{commit certificate} $\CommitCert$.
\end{itemize}

The above is standard PBFT.
We now add an extra propagation step to the steady state.

\begin{itemize}
\item (\textbf{Notify})
Upon committing $h$, $M$ sends $\sig{\sig{\Notify, \CEVS, h}, \CommitCert}_M$ to all other members to notify them about the decision.
$M$ also starts propagating this decision on the peer-to-peer network to miners, users and merchants, etc. 
$M$ then moves to slot $s+1$.

Upon receiving a $\Notify$ message like the above, a member commits $h$, sends and propagates its own $\Notify$ message if it has not already done so, and then moves to slot $s+1$. 
\end{itemize}

\subsection{View Change}	\label{sec:vc}

A Byzantine leader cannot violate safety, but it can prevent progress by simply not proposing.
Thus, in PBFT, every honest member $M$ monitors the current leader, and if no new slot is committed after some time, 
$M$ abandons the current leader and supports the next leader.
Since we have assumed a worst-case message delay of $\Delta$, it is natural to set the timeout based on $\Delta$.
The concrete timeout values ($4\Delta$ and $8\Delta$) will be justified in the proof of Theorem~\ref{thm:safelive}.

\begin{itemize}
\item (\textbf{View-change}) 
Whenever a member $M$ moves to a new slot $s$ in a steady state, it starts a timer $T$.
If $T$ reaches $4\Delta$ and $M$ still has not committed slot $s$, then $M$ abandons the current leader and broadcasts $\sig{\Blame, \CEV}$.

Upon receiving $2f+1$ matching $\Blame$ messages for $(\CEV)$, if a member $M$ is not in a view higher than $(\CEV)$, it forwards the $2f+1$ $\Blame$ messages to the new leader $L'=L(c,e,v+1)$.
After that, if $M$ does not receive a $\NewView$ message from $L'$ within $2\Delta$, then $M$ abandons $L'$ and broadcasts $\sig{\Blame,c,e,v+1}$.

\item (\textbf{New-view})
Upon receiving $2f+1$ matching $\Blame$ messages for $(\CEV)$, the new leader $L'=L(c,e,v+1)$ concatenates them into a view-change certificate $\BlameSum$, broadcasts $\sig{\NewView,c,e,v+1,\BlameSum}_{L'}$, and enters view $(c,e,v+1)$.

Upon receiving a $\sig{\NewView,\CEV,\BlameSum}_{L'}$ message, 
if a member $M$ is not in a view higher than $(\CEV)$, it enters view $(\CEV)$ and starts a timer $T$.
If $T$ reaches $8\Delta$ and still no new slot is committed, then $M$ abandons $L'$ and broadcasts $\sig{\Blame, \CEV}$. 

\item (\textbf{Status})
Upon entering a new view $(\CEV)$, $M$ sends $\sig{\sig{\Status,\CEV, s-1, h, s, h'}, \CommitCert, \AcceptCert}$ to the new leader $L'=L(\CEV)$.  
In the above message, $h$ is the value committed in slot $s-1$ and $\CommitCert$ is the corresponding commit certificate; 
$h'$ is the value accepted by $M$ in slot $s$ and $\AcceptCert$ is the corresponding accept certificate ($h'=\AcceptCert=\bot$ if $M$ has not accepted any value for slot $s$).
We call the inner part of the message (i.e., excluding $\CommitCert,\AcceptCert$) its \emph{header}.
 
Upon receiving $2f+1$ $\Status$, $L'$ concatenates the $2f+1$ $\Status$ headers to form a $\Status$ certificate $\StatusSum$.
$L'$ then picks a $\Status$ message that reports the highest last-committed slot $s^*$; if there is a tie, $L'$ picks the one that reports the highest ranked accepted value in slot $s^*+1$ .
Let the two certificates in this message be $\CommitCert^*$ and $\AcceptCert^*$ ($\AcceptCert^*$ might be $\bot$). 

\item (\textbf{Re-propose}) 
The new leader $L'$ broadcasts $\sig{\Repropose, \CEV, s^*+1, h', \StatusSum, \CommitCert^*, \AcceptCert^*}_{L'}$.
In the above message, $s^*$ should be the highest last-committed slot reported in $\StatusSum$.
$h'$ should match the value in $\AcceptCert^*$ if $\AcceptCert^* \neq \bot$;
If $\AcceptCert^* = \bot$, then $L'$ can choose $h'$ freely.
\footnote{
Strictly speaking, in this case, $L'$ is proposing a new value rather than re-proposing.
But we keep the message name as $\Repropose$ for convenience.
A similar situation exists in reconfiguration.
}

The $\Repropose$ message serves two purposes.
First, $\CommitCert^*$ allows everyone to commit slot $s^*$.
Second, it re-proposes $h'$ for slot $s^*+1$, and carries a proof $(\StatusSum, \AcceptCert^*)$ that $h'$ is safe for slot $s^*+1$.
The $\Repropose$ message is invalid if any of the aforementioned conditions is violated: $s^*$ is not the highest committed slot, $\CommitCert$ is not for slot $s^*$, $\AcceptCert$ is not for the highest ranked accepted value for $s^*+1$, or $h'$ is not the value certified by $\AcceptCert$.

Upon receiving a valid $\Repropose$ message, a member $M$ commits slot $s^*$ if it has not already; 
$M$ then executes the prepare/commit/notify steps as in the steady state for slot $s^*+1$, and marks all slots $>s^*+1$ \emph{fresh} for view $(\CEV)$.
At this point, the system transitions into a new steady state.
\end{itemize}

A practical implementation of our protocol can piggyback $\Status$ messages on $\Blame$ messages and $\Repropose$ messages on $\NewView$ messages to save two steps.
We choose to separate them in the above description because the reconfiguration sub-protocol can reuse the $\Status$ step.

\subsection{Reconfiguration}
\label{sec:reconfig}

\par\medskip\noindent\textbf{PoW to prevent Sybil attacks.}
In order to join the committee, a new member must show a PoW, i.e., a solution to a computational puzzle based on a random oracle $\mathcal{H}$.  
The details of the puzzle will be discussed later.
For now, it is only important to know that each configuration defines a new puzzle.
Let the puzzle for configuration $c$ be $\Puzzle(c)$.
Solutions to this puzzle are in the form of $(\PuzzleSol)$.
\symb{pk} is a miner's public key.
A miner keeps trying different values of \symb{nonce} until it finds a valid solution such that 
$\mathcal{H}(\Puzzle(c), \PuzzleSol)$ is smaller than some threshold. 
The threshold is also called the \emph{difficulty} and is periodically adjusted to keep the expected reconfiguration interval at $\puzzleTime$. 

Upon finding a PoW, the miner tries to join the committee by driving a reconfiguration consensus as an external leader.
To do so, it first broadcasts the PoW to committee members.

\begin{itemize}
\item (\textbf{New-\epoch}) 
Upon finding a PoW, the miner broadcasts it to the committee members.

Upon receiving a new valid PoW for the current configuration (that $M$ has not seen before), 
$M$ forwards the PoW to other members, enters view $(c,e+1,0)$, sets the PoW finder as its 0-th leader $L'=L(c,e+1,0)$ of the new \epoch{}, and starts a timer $T$.
If $T$ reaches $8\Delta$ and still no new slot is committed, then $M$ abandons $L'$ and broadcasts $\sig{\Blame,c,e+1,0}$.

\item (\textbf{Status})
Upon entering a new \epoch{}, $M$ sends $L'$ $\sig{\sig{\Status,\CEZ, s-1, h, s, h'}, \CommitCert, \AcceptCert}$, where $s,h,\CommitCert,\AcceptCert$ are defined the same way as in the view-change sub-protocol.

Upon receiving $2f+1$ $\Status$, $L'$ constructs $s^*$, $\StatusSum$, $\CommitCert^*$ and $\AcceptCert^*$ as in view-change.

\item (\textbf{Re-propose and Reconfigure})
Let $h^*$ be the committed value in the highest committed slot $s^*$ among $\StatusSum$.
Let $h'$ be the highest ranked accepted value (could be $\bot$) into slot $s^*+1$ among $\StatusSum$.
Note that $h^*$ and $h'$ are certified by $\StatusSum, \CommitCert^*$ and $\AcceptCert^*$.
Now depending on whether $h^*$ and $h'$ are transactions or reconfiguration events, the new external leader $L'$ takes different actions.

\begin{itemize}
\item If $h^*$ is a reconfiguration event to configuration $c+1$, then $L'$ simply broadcasts $\CommitCert$ and terminates.
Terminating means $L'$ gives up its endeavor to join the committee (it is too late and some other leader has already finished reconfiguration) and starts working on $\Puzzle(c+1)$.

\item If $h^*$ is a batch of transactions, and $h'$ is a reconfiguration event to configuration $c+1$, then $L'$ broadcasts $\sig{\Repropose, \CEZ, s^*+1, h', \StatusSum, \CommitCert^*, \AcceptCert^*}_{L'}$ and then terminates.

\item If $h^*$ is a batch of transactions, and $h'=\bot$, then $L'$ tries to drive the reconfiguration consensus into slot $s^*+1$ by broadcasting
$\sig{\Repropose, \CEZ, s^*+1, h, \StatusSum, \CommitCert^*, \AcceptCert^*}_{L'}$ where $h$ is a reconfiguration event that lets $L'$ join the committee.
If this proposal becomes committed, then starting from slot $s^*+2$, the system reconfigures to the next configuration and $L'$ joins the committee replacing the oldest member. 

\item If $h^*$ and $h'$ are both batches of transactions, then $L$' first re-proposes $h'$ for slot $s^*+1$ by broadcasting $\sig{\Repropose, \CEZ, s^*+1, h', \StatusSum, \CommitCert^*, \AcceptCert^*}_{L'}$.
After that, $L'$ tries to drive a reconfiguration consensus into slot $s^*+2$ by broadcasting
$\sig{\Propose, \CEZ, s^*+2, h}_{L'}$ where $h$ is a reconfiguration event that lets $L'$ join the committee.
If this proposal becomes committed, then starting from slot $s^*+3$, the system reconfigures to the next configuration and $L'$ joins the committee replacing the oldest member. 
\end{itemize}

\end{itemize}

In the latter three cases, members react to the $\Repropose$ message in the same way as in view-change: check its validity and execute prepare/commit/notify.
Note that the reconfiguration proposal in the last case is a steady state $\Propose$ message instead of a $\Repropose$ message, and it does not need to carry certificates as proofs.
This is because after members commit the re-proposed value into slot $s^*+1$, the system enters a special steady state under leader $L'$. 
Slot $s^*+2$ will be marked as a \emph{fresh} slot in this steady state.
This steady state is special in the sense that, 
if $L'$ is not faulty and is not interrupted by another PoW finder,
members will experience a configuration change, but the leader remains unchanged as we have defined $L(c+1,0,0)$ to be the newly added member. 
After reconfiguration, it becomes a normal steady state in view $(c+1,0,0)$.

\par\medskip\noindent\textbf{The puzzle and defense to selfish mining.}
We now discuss what exactly the puzzle is.
A natural idea is to make the puzzle be the previous reconfiguration decision. 
However, this allows \emph{withholding attacks} similar to selfish mining~\cite{SelfishMining}.
To favor the adversary, we assume that if the adversary and an honest miner both find PoWs, the adversary wins the contention and joins the committee.
Then, when the adversary finds a PoW first, its best strategy is to temporarily withhold it. 
At this point, the adversary already knows the next puzzle, which is its own reconfiguration proposal.
Meanwhile, honest miners are wasting their mining power on the current puzzle, not knowing that they are doomed to lose to the adversary when they eventually find a PoW.
This gives the adversary a head start on the next puzzle.
Its chance of taking the next seat on the committee will hence be much higher than its fair share of mining power.
The authors of ByzCoin suggested an idea to thwart the above attack: let the puzzle include $2f+1$ committee member signatures.
This way, a PoW finder cannot determine the puzzle on its own and thus gains nothing from withholding its PoW.
For concreteness, we let $\Puzzle(c+1)$ be \emph{any} set of $f+1$ $\Notify$ headers for the last reconfiguration decision $\sig{\Notify, \CEVS, h}$. This ensures the following:

\begin{lemma}
The adversary learns a puzzle at most $2\Delta$ time earlier than honest miners.
\label{lemma:puzzle}
\end{lemma}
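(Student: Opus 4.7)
The plan is to pinpoint the earliest time $t_A$ at which the adversary can possibly hold a valid puzzle $\Puzzle(c+1)$, let $t_H$ denote the time by which every honest miner can hold one, and establish that $t_H - t_A \le 2\Delta$. The essential observation is that the $f+1$ threshold forces any valid puzzle to include at least one Notify header signed by an honest committee member.

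I would begin from the definition of $\Puzzle(c+1)$ as $f+1$ Notify headers of the form $\sig{\Notify,\CEVS,h}$ for the last reconfiguration decision $h$. Since each committee contains at most $f$ Byzantine members, and honest signatures cannot be forged, any valid puzzle must contain at least one Notify header signed by an honest committee member. Let $M^\star$ be the first honest committee member to broadcast a Notify for $h$, at time $t^\star$. Because messages do not travel faster than real time, the adversary cannot receive any honest Notify header before $t^\star$, and therefore $t_A \ge t^\star$.

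Next, I would show $t_H \le t^\star + 2\Delta$ using two consecutive $\Delta$-hops. When $M^\star$ sends its Notify at $t^\star$, it both broadcasts to the committee and begins propagating the decision on the peer-to-peer network to miners, so by $t^\star + \Delta$ every honest committee member has received $M^\star$'s Notify. By the protocol, any such member that has not yet committed $h$ does so and immediately broadcasts its own Notify, again with a p2p propagation to miners. By $t^\star + 2\Delta$, each of these Notify messages from the $2f+1$ honest committee members has reached every honest miner via the p2p network, so every honest miner holds at least $2f+1 \ge f+1$ distinct Notify headers for $h$ and can assemble a valid puzzle. Combining with $t_A \ge t^\star$ gives $t_H \le t_A + 2\Delta$.

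The step I expect to be the main subtlety is recognizing that one cannot obtain the $2\Delta$ bound from any timing gap between when Byzantine and honest members sign Notify headers: a Byzantine committee member can pre-sign a Notify for $h$ as soon as $h$ is announced in a $\Propose$, without ever actually committing, so the adversary may already hold up to $f$ valid Notify headers arbitrarily early. The full $2\Delta$ slack must therefore be extracted entirely from the single honest Notify header needed to reach the $f+1$ threshold, which is precisely why the puzzle is defined with threshold $f+1$ rather than anything smaller.
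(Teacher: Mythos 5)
Your proof is correct and follows essentially the same route as the paper's: the adversary needs at least one honest $\Notify$ header (since Byzantine members supply at most $f$ of the required $f+1$), that first honest $\Notify$ triggers all honest members to send theirs within $\Delta$, and one further $\Delta$ delivers enough headers to every honest miner. Your version merely spells out the $f+1$-threshold justification and the Byzantine pre-signing caveat that the paper leaves implicit.
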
 
\begin{proof}
For the adversary to learn the puzzle, at least one honest committee member must broadcast its $\Notify$ message, which will cause all other honest members to broadcast their $\Notify$ messages within $\Delta$ time.
After another $\Delta$ time, all honest miners receive enough $\Notify$ messages to learn the puzzle.
\end{proof}

\subsection{Safety and Liveness}
\label{sec:safety}

We first prove safety and liveness \emph{assuming} each committee has no more than $f<n/3$ Byzantine members.
The proof of the following theorem (given in the appendix)  mostly follows PBFT. 
Essentially, the biggest change we made to PBFT at an algorithmic level is from a round robin leader schedule to a potential interleaving between internal and external leaders.
But crucially, we still have a total order among leaders to ensure safety and liveness. 

\begin{theorem}
The protocol achieves safety and liveness if each committee has no more than $f<n/3$ Byzantine members.
\label{thm:safelive}
\end{theorem}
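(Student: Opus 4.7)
My plan is to decompose the theorem into a safety part (all honest members agree on the value committed in each slot) and a liveness part (every slot is eventually committed) and, in each case, adapt the standard PBFT template to the richer leader structure of Solida, where views are totally ordered by the $(\CEV)$ tuple and external PoW leaders can be interleaved with internal round-robin leaders.

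For safety, the central invariant I would prove by strong induction on $(\CEV)$ (in the lexicographic order used by the protocol) is: if some honest member commits value $h$ in slot $s$ in view $V_0$, then in every view $V_1 \geq V_0$, no honest member accepts a value different from $h$ in slot $s$. Single-view safety is the standard quorum-intersection argument, since two accepted values would require two $2f+1$-sized Prepare sets sharing $f+1$ members, contradicting that honest members send at most one Prepare per (view, slot). For the inductive step, I would use that the $2f+1$ Commit senders for $h$ in slot $s$ in $V_0$ all hold an AcceptCert for $h$. Any $\StatusSum$ collected in $V_1$ intersects this set in at least $f+1$ members, each of which either (a) reports last-committed slot $\geq s$, forcing $s^* \geq s$, or (b) is still at slot $s$ carrying its AcceptCert for $h$ (any AcceptCert for a different value would need $f+1$ honest Prepares in an intermediate view, violating the inductive hypothesis). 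Hence the Repropose in $V_1$ either skips slot $s$ by targeting $s^*+1>s$, or reproposes exactly $h$ for slot $s$; the "fresh slot" rule together with the no-pipelining discipline then prevents any honest member from accepting a fresh value in slot $s$. This same reasoning applies uniformly to transaction proposals and reconfiguration proposals, since consensus treats them symmetrically.

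For liveness, I would argue that view numbers are strictly monotone once progress stalls: whenever the $4\Delta$ steady-state timer or the $8\Delta$ post-view-change timer elapses without a new commit, an honest member broadcasts Blame and a quorum of $2f+1$ Blame messages causes the next leader to be supported. Because the $(\CEV)$ order is well-founded within a fixed configuration and honest members refuse to stay in an unproductive view, every view is eventually superseded, and by round-robin rotation an honest committee member eventually becomes leader. Once an honest $L$ is in its view without a higher-ranked competitor, the Propose/Prepare/Commit cycle completes in $3\Delta$ (or the Status/Repropose/Prepare/Commit cycle in $4\Delta$ after a view-change), strictly below the timeout, so a new slot is committed. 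External PoW leaders are handled by the same mechanism: a stalled external leader is replaced by an internal round-robin leader at $v=1$ in the same lifespan, and because the expected puzzle interval $\puzzleTime$ is chosen much larger than $\Delta$ (cf.\ Lemma~\ref{lemma:puzzle}), internal leaders have ample time to commit between successive PoWs.

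The main technical hurdle I expect is the cross-view safety argument when the protocol transitions between external and internal leaders, since a reconfiguration-carrying slot may sit immediately adjacent to transaction slots, and the special steady-state following a reconfiguration Repropose preserves the leader while changing the configuration. Showing that the "fresh slot" predicate is set and reset consistently at every honest member across this transition --- so that no honest member ever accepts an inconsistent fresh value for a slot already pinned by a Repropose or by a prior commit in a lower-ranked view --- is the delicate bookkeeping step; once that is in hand, the remainder of the argument is a direct lift of the PBFT safety and liveness proofs into the richer $(\CEV)$ leader order.
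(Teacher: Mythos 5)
Your safety argument is essentially the paper's: induction over the totally ordered $(\CEV)$ views, quorum intersection within a view, and the observation that the no-pipelining rule forces $s^*\geq s-1$ in any later $\Status$ certificate, so slot $s$ is either already committed or must be re-proposed with the unique highest-ranked accepted value $h$. The only imprecision is that you say \emph{each} of the $f+1$ members in the intersection reports correctly; you only get this for the honest ones, but since at least one is honest and Byzantine members cannot forge a higher-ranked accept certificate (by the inductive hypothesis), the conclusion stands.

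The genuine gap is in liveness. Your argument rests on the claim that ``honest members refuse to stay in an unproductive view,'' so round-robin rotation eventually installs an honest internal leader. But a Byzantine leader can be \emph{productive in the trivial sense} --- it can keep proposing and committing transactions that merely shuffle funds among its own accounts, which honest members cannot distinguish from useful work. No $4\Delta$ or $8\Delta$ timer ever fires, no $\Blame$ quorum forms, and the round-robin rotation you rely on never happens; such a leader censors all honest users' transactions indefinitely while your progress metric is satisfied. The paper's proof closes this hole differently: it does \emph{not} rely on view changes to displace such a leader, but on reconfiguration --- honest miners win at least $2/3$ of the reconfiguration races (Theorem~\ref{thm:committee_quality}), the PoW finder becomes the new leader $L(c+1,0,0)$, and so an honest leader is in control a constant fraction of the time. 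The complementary obligation, which you state only in passing, is that an honest leader is \emph{never} blamed by any honest member for an entire \epoch{}; this is where the specific constants enter (the $\Notify$ step bounds the commit-time skew between honest members by $\Delta$, giving $\Delta + 3\Delta$ for the steady state, an extra $4\Delta$ for $\Status$/$\Repropose$ after a view or \epoch{} change, and $2\Delta$ for the $\NewView$ round trip). Without the censorship observation and the appeal to external leaders, the liveness half of your proof establishes only ``slots keep getting filled,'' which is weaker than what the paper actually argues.
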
 

Next, we show $f<n/3$ indeed holds.
Let $\rho$ be the adversary's ratio of mining power.
Honest miners thus collective control $1-\rho$ of the total network mining power.
Ideally, we would hope that reconfiguration is a ``fair game'', i.e., the adversary taking the next committee seat is with probability $\rho$, and an honest miner takes the next seat with probability $1-\rho$.
This is indeed simply assumed to be the case in PeerCensus~\cite{PeerCensus} and ByzCoin~\cite{ByzCoin}.
However, we show that due to network latency,
it is as if that the adversary has an \emph{effective mining power} $\rho' = 1 - (1-\rho)e^{-(2\rho+8)\Delta/\puzzleTime} > \rho$.
(Recall that $\puzzleTime$ is the expected time for the entire network to find a PoW.)
We need $\puzzleTime \gg \Delta$ so that $\rho'$ is not too much larger than $\rho$.

\begin{theorem}
Assuming $f<n/3$ holds for $C_1$, then $f<n/3$ holds for each subsequent committee except for a probability exponentially small in $n$ if $\rho' = 1 - (1-\rho)e^{-(2\rho+8)\Delta/\puzzleTime} < 1/3$.
\label{thm:committee_quality}
\end{theorem}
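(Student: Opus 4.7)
The plan is to reduce the theorem to a Chernoff-style concentration argument. For each reconfiguration $i$, let $X_i\in\{0,1\}$ indicate whether the newly admitted member is Byzantine. Any committee $C_k$ with $k>1$ consists of the $n$ most recently admitted members (apart from at most $n-1$ leftover members of $C_1$, which is already assumed safe), so its Byzantine count is dominated by $\sum_{j=i-n+1}^{i} X_j$ for the appropriate $i$. It therefore suffices to establish three things: a per-trial bound $P[X_i=1]\leq\rho'$; (approximate) mutual independence of the $X_i$'s; and a standard Chernoff tail for the sum of $n$ iid Bernoullis with mean $\leq\rho'<1/3$, yielding the advertised $e^{-\Omega(n)}$ bound. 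A final union bound over the polynomially many committees arising in any polynomial-length execution preserves exponential smallness.

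The core of the argument is the per-trial bound. I would model adversarial and honest PoW discoveries for the $i$-th puzzle as independent Poisson processes with rates $\rho/\puzzleTime$ and $(1-\rho)/\puzzleTime$, respectively, and decompose $P[X_i=1]=P[A]+P[\bar A\cap B]$, where $A$ is the event ``the adversary finds a valid PoW strictly before any honest miner does'' and $B$ is ``the adversary nevertheless obtains the seat.'' A standard property of competing Poisson processes gives $P[A]=\rho$. For $P[\bar A\cap B]$, I would bound the vulnerable window during which an adversarial PoW can still displace a leading honest PoW. Two ingredients feed in: by Lemma~\ref{lemma:puzzle} the adversary has up to a $2\Delta$ head start on learning the puzzle, and by the external-leader timeout in Section~\ref{sec:reconfig} an honest leader's reconfiguration attempt is protected for up to $8\Delta$ before it can be preempted. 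Bounding the probability of an adversarial PoW arrival within the effective window of length $(2\rho+8)\Delta$, weighted by the $(1-\rho)$ probability that honest was ``first'' to begin with, yields $P[\bar A\cap B]\leq(1-\rho)\bigl(1-e^{-(2\rho+8)\Delta/\puzzleTime}\bigr)$. Summing gives $P[X_i=1]\leq 1-(1-\rho)e^{-(2\rho+8)\Delta/\puzzleTime}=\rho'$.

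Independence across trials rests on two observations. First, the puzzle $\Puzzle(c+1)$ is constructed from $f+1$ $\Notify$ headers for the previous reconfiguration decision (the withholding defence discussed in Section~\ref{sec:reconfig}): no miner can productively mine on puzzle $c+1$ until an honest quorum has signed, so mining effort is freshly re-invested at each reconfiguration and cannot be ``banked'' across rounds. Second, the memorylessness of exponential inter-arrival times for PoW discovery makes each race's outcome independent of when it begins. Together these justify treating the $X_i$ as iid for Chernoff purposes, and the concluding tail bound is then routine.

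The main obstacle is the timing bookkeeping that produces the $(2\rho+8)\Delta$ window. One must argue carefully that this really is the worst-case vulnerable interval: the $8\Delta$ contribution from the external-leader timeout is relatively clean, but the $2\rho\Delta$ cross-term subtly encodes that the $2\Delta$ head start only helps the adversary in proportion to its share of mining power, rather than handing the adversary a free $2\Delta$ of uncontested mining. One also has to rule out side channels by which the adaptive adversary might manufacture a longer effective window---e.g., releasing withheld PoWs that straddle a reconfiguration boundary, or stalling an external leader to force a cascade of successor external leaders. Closing these off cleanly requires invoking the puzzle construction together with the total order on leaders from Section~\ref{sec:protocol_honest} before the Chernoff step can be safely applied.
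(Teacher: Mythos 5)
Your proposal is correct and takes essentially the same approach as the paper: the same three sources of adversarial advantage (the $2\Delta$ head start from Lemma~\ref{lemma:puzzle}, the $8\Delta$ reconfiguration window, and adversarial coordination among Byzantine miners) produce the same effective mining power $\rho'$, followed by the same memorylessness-based independence argument and Chernoff bound. The only cosmetic difference is that you bound the adversary's win probability directly as $\rho+(1-\rho)\bigl(1-e^{-(2\rho+8)\Delta/\puzzleTime}\bigr)$, whereas the paper computes the honest win probability as the product $\Pr(X)\Pr(Y)\Pr(Z)=(1-\rho)e^{-(2\rho+8)\Delta/\puzzleTime}$; these are complements of one another.
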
 

\begin{proof}
The increase in the adversary's effective mining power comes from three sources.
First, as Lemma~\ref{lemma:puzzle} shows, the adversary may learn a puzzle up to $2\Delta$ earlier than honest miners.
Second, if the adversary finds a PoW while an honest external leader is reconfiguring, which can take up to $8\Delta$ time, the adversary can interrupt the honest leader and win the reconfiguration race.
Lastly, two honest leaders may interrupt each other if they find PoWs within $8\Delta$ time apart, while the adversary-controlled miners can coordinate their actions.
Therefore, an honest miner wins a reconfiguration if all three events below happen:
\begin{itemize}
\item (event $X$) the adversary finds no PoW in its $2\Delta$ head start,
\item (event $Y$) some honest miner finds a PoW earlier than the adversary given $X$, and 
\item (event $Z$) no miner, honest or adversarial, finds a PoW in the next $8\Delta$ time given $Y$.
\end{itemize}

PoW mining is a \emph{memory-less} process modeled by a Poisson distribution: 
the probability for a miner to find $k$ PoWs in a period is $p(k,\lambda)= \frac{\lambda^k e^{-\lambda}} {k!}$,
where $\lambda$ is the expected number of PoWs to be found in this period.
Recall that the expected time for all miners combined to find a PoW is $\puzzleTime$.
In the $2\Delta$ head start, the adversary expects to find $\lambda_X = 2\Delta\rho/\puzzleTime$ PoWs.
Thus, $\Pr(X) = p(0, \lambda_X) = e^{-2\Delta\rho/\puzzleTime}$. 
Similarly, $\Pr(Z) = e^{-8\Delta/\puzzleTime}$.
Event $Y$ is a ``fair race'' between the adversary and honest miners, so $\Pr(Y) = 1-\rho$.
The memory-less nature of PoW also means that the above events $X, Y$ and $Z$ are independent.
Thus, the probability that an honest miner wins a reconfiguration is 
$$\Pr(X \cap Y \cap Z) = \Pr(X) \Pr(Y) \Pr(Z) = (1-\rho)e^{-(2\rho+8)\Delta/\puzzleTime} := 1 - \rho'.$$ 
We define the above probability to be $1-\rho'$, which can be thought of as the honest miners' effective mining power once we take message delays into account.
$\rho'$ is then the adversary's effective mining power.
This matches our intuition: as $\puzzleTime \rightarrow \infty$, the adversary's advantage from message delay decreases, 
and $(1-\rho') \rightarrow (1-\rho)$. 

Conditioned on all committees up to $C_i$ having no more than $n/3$ Byzantine members, the adversary wins each reconfiguration race with probability at most $\rho'$, independent of the results of other reconfiguration races.
We can then use the Chernoff inequality to bound the probability of Byzantine members exceeding $n/3$ on any committee up to $C_{i+1}$.
Let $Q$ denote the number of Byzantine members on a committee. 
We have $E(Q) := \mu = \rho'n$.
$\Pr(Q \geq (1+\delta)\mu) \leq e^{\frac{-\delta \mu \log(1+\delta) }{2}}$ due to the Chernoff bound.
Select $\delta = \frac{1}{3\rho'} - 1$,
we have $\Pr(Q \geq n/3) \leq e^{\frac{n(1-3\rho')\log (3\rho')}{6}}$. 
If $\rho'<1/3$, then $(1-3\rho') \log(3\rho') < 0$, and the above probability is exponentially small in $n$ as required.
\end{proof}

\par\medskip\noindent\textbf{Concrete committee size.}
We calculate the required committee size under typical values.
We can calculate $\rho'$ from $\rho$ given the value of $\Delta/\puzzleTime$.
If $\Delta$ is 5 seconds and $\puzzleTime$ is 10 minutes, then $\Delta/\puzzleTime = 1/120$.
Then, for a desired security level, a simple calculation of binomial distribution yields the required committee size $n$.
The results are listed in Table~\ref{tab:committee_size}.
A security parameter $k$ means there is a $2^{-k}$ probability that, after a reconfiguration, the adversary controls more than $n/3$ seats on the committee.
$k=25$ or $k=30$ should be sufficient in practice.
Assuming we reconfigure every 10 minutes, 
$2^{25}$ reconfigurations take more than 600 years and $2^{30}$ reconfigurations take more than 20,000 years.
With $\rho'=25\%$, the required committee size is around 1000.

It is worth noting that as $\rho'$ approaches 33\%, the required committee size increases rapidly and the committee-based approach becomes impractical.
This gap between theory and practice is expected.
Similarly, Bitcoin, in theory, can tolerate an adversarial with up to 50\% mining power. 
But the recommended ``six confirmation'' is calculated assuming a 10\% adversary and a rather low security level $k \approx 10$. 
If the adversary really controls close to 50\% mining power, then thousands of confirmations would be required for each block.

\begin{table}[tb]
\centering
\caption{The required committee size under different adversarial miner ratios $\rho$ and desired security levels $k$, assuming $\Delta/\puzzleTime = 1/120.$}
\label{tab:committee_size}
\begin{tabular}{c | c | c  c  c  c  c}
\toprule
\multicolumn{2}{c}{$k$} & 20 & 25 & 30 & 35 & 40 \\
\midrule
$\rho=14\%$ & $\rho'=20\%$ & 232 & 298 & 367  & 439 & 508 \\
$\rho=20\%$ & $\rho'=25\%$ & 649 & 841 & 1036 & 1231 & 1423 \\
$\rho=23\%$ & $\rho'=28\%$ & 1657 & 2149 & 2644 & 3142 & 3580 \\
$\rho=25\%$ & $\rho'=30\%$ & 4366 & 5650 & 6949 & 8248 & 9256 \\
\bottomrule
\end{tabular}
\end{table}

\subsection{Towards Pipelining}
One limitation of the above \name{} protocol is that it forbids pipelining, i.e., members are not allowed to work on multiple slots in parallel. 
The interaction between pipelining and reconfiguration is non-trivial, 
and to the best of our knowledge, has not been worked out in BFT protocols.
In particular, when a slot is being worked on, the exact configuration (i.e., identities of committee members) for that slot would not be known if previous slots have not been committed. 
In this subsection, we briefly describe how we plan to support pipelining. 
We hope to present and implement a detailed pipelined protocol in a future version.

The key idea is to let each proposal carry a digest of its ``predecessor proposal''. 
Since the predecessor proposal also carries the digest of its own predecessor proposal, each proposal is tied to its entire chain of ``ancestor proposals''. 
A non-leader member $M$ sends messages for a proposal for slot $s$ only if $M$ has accepted its predecessor proposal in slot $s-1$. 
When a member accepts a proposal, it also accepts all its ancestors proposals; 
likewise, when a member commits a proposal, it also commits all its ancestors proposals. 

During a view-change or a reconfiguration, a member reports in its $\Status$ message the highest ranked proposal chain that it has accepted (with an accept certificate).
The new leader $L'$ then re-proposes the highest ranked proposal chain it hears from members.
Note that in either step, we prioritize the highest ranked proposal chain, not the longest proposal chain.
If the highest ranked proposal chain contains a reconfiguration event in its tail (highest numbered slot), then $L'$ re-proposes this proposal chain and terminates.
Otherwise, $L'$ re-proposes the proposal chain and then start making its own proposals.
If $L'$ is a PoW finder, then its proposal will be a reconfiguration proposal.
If its proposal goes through, $L'$ joins the committee and the system enters a new configuration.
The first leader in the new configuration is the newly added member $L'$, and the system transitions back to a steady state with $L'$ now serving as an internal leader.

\section{Implementation and Evaluation}	\label{sec:eval}

\par\medskip\noindent\textbf{Implementation details.}
We implement the non-pipelined version of the \name{} consensus 
protocol in Python and measure its performance.
We implement our 6-phase PBFT-style protocol in which committee members agree on a proposal digest from a successful miner.
For digital signatures, we use ECDSA with the prime192v2 curve and the Charm crypto library~\cite{CharmCrypto}. 
We test \name{} on 16 m4.16xlarge machines on Amazon EC2. 
The CPUs are Intel Xeon E5-2686 v4 @ 2.3 GHz.
Each machine has 64 cores, giving a total of 1024 cores.
We use 1 CPU core for each committee member, testing committee size up to 1000.
To emulate Internet latency and bandwidth, 
we add a delay of 0.1 second (unless otherwise stated) to each message and throttle the bandwidth of each committee member to the values in Figure~\ref{fig:results}. 

\par\medskip\noindent\textbf{Experimental results.}
The main result we report is the time it takes for committee members to commit a reconfiguration consensus decision (committing a decision in the steady state takes strictly less time).
We report the consensus time from the leader's perspective, i.e., the time between when the leader sends its PoW and when the leader receives the first $\sym{Notify}$ message, which is the earliest point at which the leader knows its proposal is committed.
Figure~\ref{fig:results} presents the time to reach a reconfiguration consensus decision (averaged over 10 experiments) under different committee sizes $n$ and bandwidth limits.
As shown in the zoomed-in part, with a small committee of 100 members, 
a decision takes 1 to 3 seconds depending on the throughput.
The bottleneck here is network latency.
Recall that there are 8 sequential messages from the leader's perspective, each adding 0.1s latency.
With larger committees of 400 and 1000 members, network bandwidth becomes the bottleneck.
As expected, consensus time is roughly quadratic in committee size and
inversely proportional to network bandwidth. 
With moderate network bandwidth, consensus time is manageable even with 1000 committee members: 48.3s for 35 Mbps and 23.6s for 75 Mbps.
To study the effect of network latency, we rerun some experiments with a 0.5s latency (results not shown).
In each experiment, the consensus time increases by about 3s,
which is as expected given the 8 sequential messages and a 0.4s latency increase for each.

\begin{figure}[bt]      
  \centering 
  \includegraphics[width=0.56\columnwidth]{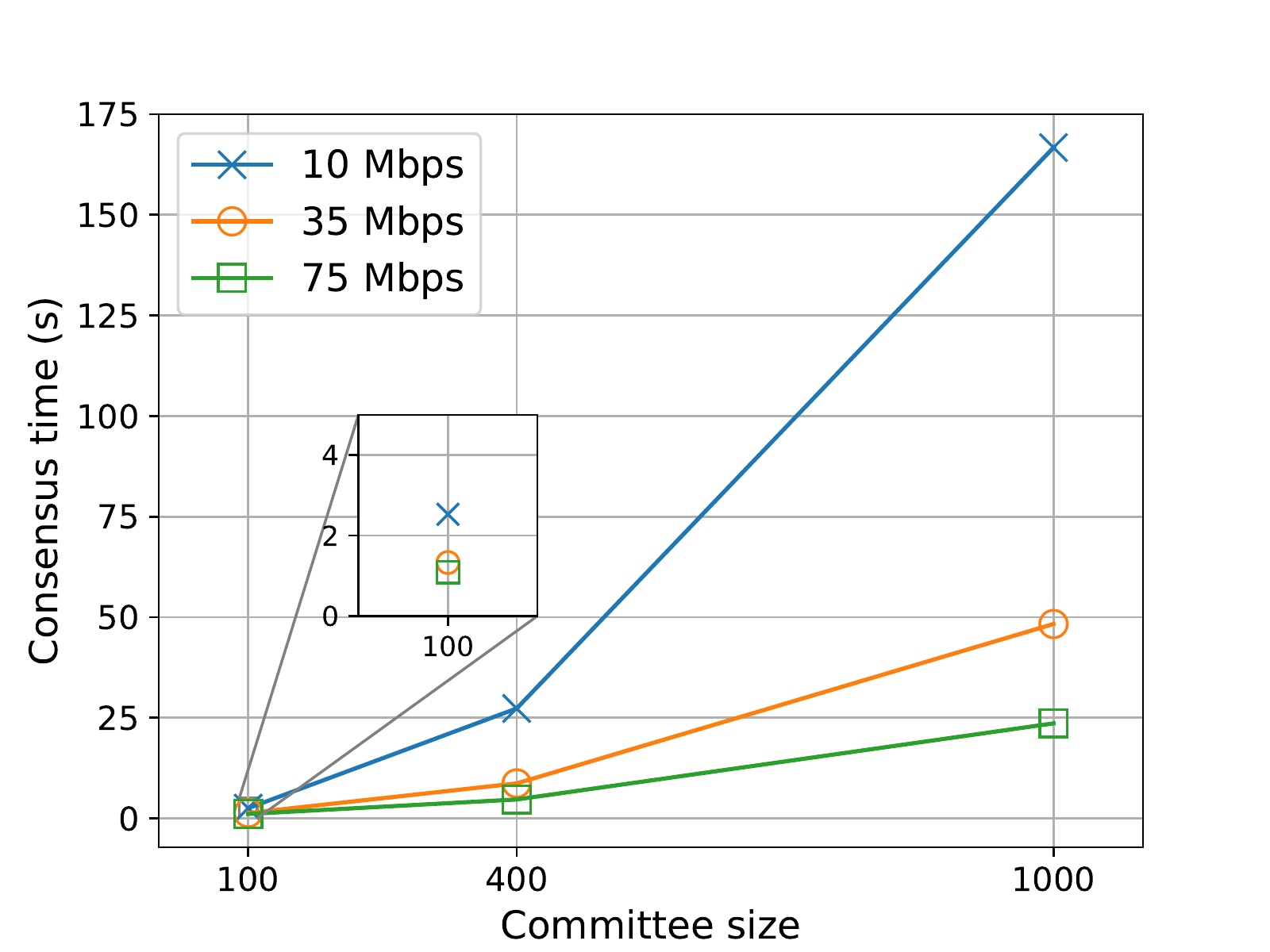}
  \caption{ 
  	Time to achieve reconfiguration consensus with different committee sizes and network bandwidth.}
  \label{fig:results}
\end{figure}

\section{Conclusion and Future Work}
\label{sec:conclusion}
This work presents \name{}, a blockchain protocol based on PoW-augmented reconfigurable Byzantine consensus.
We have presented a detailed protocol, rigorously proved safety and liveness under the (seemingly necessary) bounded message delay model, and provided a prototype implementation and evaluation results. 
Besides the pipelined protocol, interesting future directions include designing and rigorously analyzing incentives for \name{}, and extending \name{} to tolerate an adversary with up to 50\% mining power.


\section*{Acknowledgement}
We thank Eleftherios Kokoris-Kogias for clarifying the ByzCoin protocol and other helpful discussions.
We thank the anonymous reviewers for their valuable suggestions. 
This work is funded in part by NSF award \#1518765, a Google Ph.D. Fellowship award, and an Azrieli Foundation scholarship. 

{
\small
\bibliographystyle{plain}
\bibliography{blockchain}
}

\newpage
\appendix

\section{Proofs}

\begin{proof}[Proof of Theorem~\ref{thm:safelive}]
We first consider safety.
Let $M$ be the member to commit slot $s$ in the lowest ranked view.
Say $M$ commits value $h$ and it does so in view $(\CEV)$.
We need to show that no other value $h' \neq h$ can be committed into slot $s$ in that view and all future views.
In fact, we will show by induction that there cannot exist an accept certificate $\AcceptCert'$ for $h'$ in that view and all future views, which means no honest member would have accepted $h'$, let alone committing it.
Since $M$ committed in view $(\CEV)$, there must be $2f+1$ members that have accepted $h$ in that view, and $2f+1$ members that have sent $\sig{\Prepare, \CEV, h}$.
For the base case, suppose for contradiction that $\AcceptCert'$ exists for $h'$ in view $(\CEV)$. 
Then, $2f+1$ members must send $\sig{\Prepare, \CEV, h'}$.
This means at least one honest member has sent $\Prepare$ messages for two different proposals in view $(\CEV)$, which is a contradiction.
For the inductive step, suppose that no accept certificate exists for $h'$ from view $(\CEV)$ up to (excluding) view $(c',e',v')$.
Since $M$ has committed slot $s$, then at least $2f+1$ members must have committed slot $s-1$ (they send $\Prepare$ for slot $s$ only after committing slot $s-1$).
So in the status certificate $\StatusSum$ of a $\Repropose$ message for view $(c',e',v')$, the largest last-committed slot $s^* \geq s-1$.
Therefore, slot $s$ is either already committed to $h$, or has to be re-proposed.
If slot $s$ needs to be re-proposed, which means $s^*=s-1$, then $\StatusSum$ must contain at least one $\Status$ header reporting that $h$ has been accepted into slot $s$ with a rank no lower than $(c,e,v)$.
Due to the inductive hypothesis, no accept certificate can exist for $h'$ with a rank equal to or higher than $(c,e,v)$.
$h$ is, therefore, the unique highest ranked accepted value for slot $s$, and it is the only value that can be legally re-proposed in view $(c',e',v')$.
Hence, no accept certificate for $h'$ can exist in view $(c',e',v')$, completing the proof.

Next, we consider liveness.
According to the protocol, if a Byzantine leader does not commit any slot for too long, it will be replaced shortly.
But here is an another liveness attack that a Byzantine leader can perform in the cryptocurrency setting.
The Byzantine leader can simply construct transactions that transfer funds between its own accounts and keep proposing/committing these transactions.
It is impossible for honest members to detect such an attack.
Fortunately, we have shown that honest miners win at least 2/3 of the reconfiguration races, and after a reconfiguration, the newly added member becomes the leader.
So 2/3 of the time, an honest leader is in control to provide liveness.
It remains to check that an honest leader will not be replaced until the next \epoch{}.
It is sufficient to show that an honest leader will never be accused by an honest member.
This is guaranteed by our timeout values.
First, observe that the $\Notify$ step ensures two honest members commit a slot at most $\Delta$ time apart.
In the steady state, each member gives the leader $4\Delta$ time to commit each slot: one $\Delta$ to account for the possibility that other members come to this slot (i.e., finish the previous slot) $\Delta$ time later, and three $\Delta$ for the three phases in the steady state. 
At the beginning of a new view or a new \epoch{}, each member gives the leader $8\Delta$ time, because the $\Status$ and $\Repropose$ steps take extra $4\Delta$ time.
The last case to consider is when a member abandons a future leader because it does not send $\NewView$ when it should.
Before accusing a future leader, a member allows $2\Delta$ time after forwarding it a view-change certificate, which is sufficient for an honest leader to broadcast $\NewView$.

\end{proof}

\par\medskip\noindent\textbf{Remark.}
Garay et al.~\cite{GKL15} laid out three desired properties for blockchains: common prefix, chain quality, and chain growth. 
We note that these three properties are more applicable to Bitcoin-style blockchains that do not satisfy traditional safety and liveness.
Common prefix is strictly weaker than safety.
Chain growth is the Nakamoto consensus counterpart of liveness.
If chain quality is interpreted as ``the ratio of honest committee members'' in the BFT-based approach, then we have proved it in Theorem~\ref{thm:committee_quality}.
If it is instead interpreted as ``the ratio of slots committed under honest leaders'', then it is not very meaningful. 
We cannot prevent Byzantine leaders from making progress really fast, thereby decreasing the ratio of slots committed under honest leaders, but that does not hurt honest leaders' ability to commit slots at their own pace in their own views.

\section{Comparison to ByzCoin Evaluation Results}
We would like to explain an apparent contradiction between our experimental results and those of ByzCoin~\cite{ByzCoin}.
Under the same parameter settings (0.1s latency, 35 Mbps bandwidth and 100 committee members),
our implementation of PBFT takes only 1.3 seconds per consensus decision,
while ByzCoin concludes that PBFT has unacceptable performance.
As a result, ByzCoin suggests sending (and aggregating) consensus messages in a tree rooted at the leader. 
On a closer look, our results and theirs actually corroborate each other.
ByzCoin finds PBFT performance to be unacceptable only when the block is large.
With a small block size, their results confirm that PBFT is very fast and indeed takes about 1s per consensus decision. 
Therefore, the unacceptable performance they see in PBFT solely results from the leader broadcasting the entire batch of transactions $\TXs$. 
It is indeed important to reduce the burden on the leader by gossiping $\TXs$ among committee members.  
But using a tree as the communication graph is not Byzantine fault tolerant. 
A Byzantine node in the tree can make its entire subtree unreachable.
Instead, $\TXs$ should be sent through the peer-to-peer network much like in Bitcoin. 
The other major technique ByzCoin proposed, the Schnorr multi-signature, is also not Byzantine fault tolerant.
Schnorr signature has a ``commitment'' step before the actual signing step.
A Byzantine signer may participate in the commitment step but refuse to sign in the signing step.
The leader then has to initiate a new instance of Schnorr multi-signature up to $f$ times, which may be even slower than using normal signatures.
Therefore, the results reported in ByzCoin are only for the best case where there is no adversary.
These results are still relevant if we believe most of the time during the protocol, there is no adversary.  
Alternatively, we may use multi-signature schemes that do not have the commitment step and are thus Byzantine fault tolerant~\cite{BLS01}.
But these schemes are much less efficient, and it remains interesting future work to study whether they improve efficiency in practice. 

\end{document}